\documentclass[a4paper,12pt]{article}
\usepackage{amssymb}
\usepackage{amsmath}
\usepackage{epsfig}

\usepackage{graphics}

\usepackage{latexsym}
\usepackage{rotating}
\title{Negative Hierarchy of Hydrodynamic Type Equations}

\author{ Kostyantyn Zheltukhin \thanks{
email: zheltukh@metu.edu.tr}\\
{\small Department of Mathematics},\\
{\small Middle East Technical University, 06800 Ankara, Turkey}}

\date{\nonumber}
\begin{document}
\maketitle
\date{\nonumber}

\baselineskip 17pt

%\setlength{\textwidth}{15.5cm} \setlength{\textheight}{220mm}

%\numberwithin{equation}{section}

\begin{abstract} The negative integrable hierarchies of shallow water waves and dispersionless Toda lattice equations are considered.  The integrability is shown by 
explicit construction of an infinite set of conservation laws.
\end{abstract}

\newtheorem{theorem}{Theorem}
\newtheorem{lemma}[theorem]{Lemma}
\newtheorem{corollary}[theorem]{Corollary}
\newtheorem{example}[theorem]{Example}
\newtheorem{remark}[theorem]{Remark}

\section{Introduction}

Let us consider  hydrodynamic type equations that admit a Lax representation on a Poisson algebra. In such a case one  uses the $R$-matrix formalism (see \cite {Li}, \cite{Blaz} and references there in) to construct  integrable hierarchies of hydrodynamic type equations. We will discuss  negative extension of such a hierarchies. In particularwe discuss negative hierarchies for 
shallow water waves and Todda lattice equations.

Such negative hierarchies were  introduced by Karasu et al. in \cite{Karasu} and Kupershmidt in \cite{Kup} for  KdV equation. For other integrable equations they were extensively studied 
 in \cite{1GurPek}-\cite{3GurPek}. They give interesting examples of $(2+1)$-dimensional integrable equations.
 
 The negative hierarchies are constructed by considering negative powers of recursion operator. Consider a $(1+1)$-dimensional evolution equation 
\begin{equation}\label{eq1}
U_{t}=\Delta(U,U_x,\dots) 
\end{equation}
with a recursion operator ${\mathcal R}$. Assume that $U$ also depends on a variable $y$ then we can construct  the following  hierarchy of equations 
\begin{equation}\label{neghier1}
U_{y}=a{\mathcal R}^n \sigma_0 + b{\mathcal R}^{-m} \sigma_1, \qquad n,m\in {\mathbb N},  \quad a,b\in {\mathbb R},
\end{equation} 
where $\sigma_0$, $\sigma_1$ are symmetries of the equation (\ref{eq1}). 
To avoid dealing with negative powers of the operator $\mathcal R$ we can rewrite the hierarchy  (\ref{neghier1}) as 
\begin{equation}
a{\mathcal R}^m U_{y} + b{\mathcal R}^n \sigma_0= \sigma_1, \qquad n,m\in {\mathbb N},\quad a,b\in {\mathbb R}.
\end{equation}
In particular, the case $m=1$ and $\sigma_1=U_t$ is of interest for us. That is we have an equation 
\begin{equation}\label{negativehier2}
U_t=a{\mathcal R} U_y +b{\mathcal R}^n \sigma_0, \qquad n\in {\mathbb N}, a,b\in {\mathbb R}.
\end{equation}
Generally, such  equations are non-local but can be transformed to local form by  a change of variables. 
In the case of considered hydrodynamic type equations we get an integrable equations of the form
\begin{equation}
A\tilde U_{xt} + B\tilde U_{xy}+ C\tilde U_{xx}=0, \qquad j=1,2,\dots N,
\end{equation}
where $\tilde U$ is a vector function and $A,B,C$ are matrix functions depending on $U,U_x,U_y,U_t$. 

The integrability of the  equation (\ref{negativehier2}) can be shown in different ways. In \cite{1GurPek,2GurPek,3GurPek} the authors  constructed multi-soliton solutions of the obtained equations.
In  our case we construct  an infinite set of conservation laws.

The paper  is organized as follows: in Section 2 we give preliminary definitions and in Section 3 we consider examples of negative hierarchies.

\section{ Hierarchy of Equations of Hydrodynamic Type}

Following \cite{Li}, let us consider the algebra of Laurent series 
\begin{equation}
{\mathcal A}=\left \{ \sum_{n=-\infty}^\infty  u_n(x)p^n\,:\, u_n(x) \quad \mbox{decreasing rapidly as} \quad x\to\pm \infty \right \}
\end{equation}
with the brackets
\begin{equation}
\{f,g\}_k=p^k\left(\frac{\partial f}{\partial p}\frac{\partial g}{\partial x}-\frac{\partial f}{\partial x}\frac{\partial g}{\partial p}\right),\qquad k\in{\mathbb Z}.
\end{equation}
The algebra ${\mathcal A}$ admits a decomposition  into direct sum of sub-algebras  
\begin{equation}
{\mathcal A}={\mathcal A}_{\geq  -k+1}\oplus {\mathcal A}_{\leq -k},
\end{equation}
where $\displaystyle{{\mathcal A}_{\geq -k+1}=\left \{ \sum_{n=-k+1}^\infty  u_n(x)p^n  \right \}}$ and 
$\displaystyle{{\mathcal A}_{\leq -k}= \left \{ \sum_{n=-\infty}^{-k} u_n(x)p^n \right \}}$.
Taking the projection operators 
\begin{equation}
 {\mathcal P}_{\geq -k+1}: {\mathcal A}\to {\mathcal A}_{\geq -k+1} \quad \mbox{and} \quad   {\mathcal P}_{\leq -k}: {\mathcal A}\to {\mathcal A}_{\leq -k}
\end{equation}
we  construct an $R$-matrix
\begin{equation}
 {\mathcal K}=\frac{1}{2}\left({\mathcal P}_{\geq -k+1}-{\mathcal P}_{\leq -k}\right).
\end{equation}
Using the $R$-matrix, we write a Lax equation
\begin{equation}\label{R-mateq}
L_{t}=\{{\mathcal K} (L^q),L\}_k, \qquad L^q,L\in{\mathcal A}.
\end{equation}
It is convenient to denote  ${\mathcal P}_{\geq -k+1}(L)=(L)_{\geq -k+1}$, so the equation  (\ref{R-mateq}) becomes
\begin{equation}
\displaystyle
\frac{\partial L}{\partial t_n}=\{(L^q)_{\geq -k+1},L\}_k.
\end{equation}
To have a finite number of fields we take a Lax function
\begin{equation}\label{laxfunc}
L=p^{N-1}+\sum_{i=-1}^{N-2}m_i(x,t)p^i \qquad N\in{\mathbb N}.
\end{equation}
Now we can write a hierarchy of Lax equations 
\begin{equation}\label{laxeqnhydro}
\displaystyle
\frac{\partial L}{\partial t_n}=\{(L^{\frac{j}{N-1} +n})_{\geq -k+1},L\}_k, \qquad n=1,2,\dots
\end{equation}
for each  $j=1,2,\ldots, (N-1)$.
The above hierarchy admits the multi-Hamiltonian representation, a recursion operator and an infinite set of conservation laws, see \cite{GurZhel1} and references there in for more details. 
In particular, we have an infinite set of conservation laws with the conserved quantities
\begin{equation}\label{conserved}
T_n=\int (L^{\frac{j}{N-1} +n})_{=k-1}  dx, \quad n\in \mathbb N.
\end{equation}
We will use this conserved quantities in our construction. 

\noindent
To study some examples of negative hierarchies we take $N=2$, for simplicity, so
\begin{equation}\label{Lax2.0}
L=p+p+qp^{-1}
\end{equation}
and consider cases $k=0$ and $k=1$. In the case $k=0$, the equation (\ref{laxeqnhydro}) with Lax function (\ref{Lax2.0}) gives the shallow water waves equation (see \cite{Be})
\begin{align}\displaystyle
&\frac{1}{2}p_t=pp_x+q_x,\label{shallow1}\\
&\frac{1}{2}q_t=pq_x+qp_x\label{shallow2}
\end{align}
and 
in the case $k=1$, the equation (\ref{laxeqnhydro}) with Lax function (\ref{Lax2.0}) gives the dispersionless Toda lattice equations  (see \cite{St})
\begin{align}\displaystyle
&\frac{1}{2}p_t=q_x,\label{todda1}\\
&\frac{1}{2}q_t=pq_x+q\label{todda2}.
\end{align}
The other cases can be considered in a similar way. 

\section{Negative Hierarchy of Equations of Hydrodynamic Type}

To derive  negative hierarchies corresponding to the equation (\ref{laxeqnhydro}) it is convenient to use so called symmetric variables. We set the Lax function (\ref{Lax2.0}) as 
\begin{equation}\label{Lax2}
L=\frac{1}{p}(p-u)(p-v),
\end{equation}
where $u,v$ are new variables.

\subsection{Shallow Water Wave Equation ($k=0$ case)}
                                                                                                                                     
 Let $k=0$. In symmetric variables the system of equations (\ref{shallow1}), (\ref{shallow2}) is
\begin{align}\displaystyle
&\frac{1}{2}u_t=(u+v)u_x+uv_x,\label{symk=0N=2a}\\
&\frac{1}{2}v_t=vu_x+(u+v)v_x\label{symk=0N=2b}.
\end{align}
It admits  the recursion operator (see \cite{GurZhel1})
\begin{equation}\label{rec2}\displaystyle
 \mathcal{R}=\left( \begin{array}{cc}
u+v+u_xD_x^{-1} & 2u+u_xD_x^{-1} \\
2v+v_xD_x^{-1} & u+v+v_xD_x^{-1}
 \end{array} \right).
 \end{equation} 
and a set of conserved quantities, given by (\ref{conserved}),
\begin{equation}
\int T_n(u,v) dx, \qquad n\in {\mathbb N},
\end{equation}
where
\begin{equation}\label{conserved_1}
 T_n(u,v)=(L)^n_{=-1}=\sum_{k=1}^{n} C_n^kC_n^{k-1}u^kv^{n-k+1}.
\end{equation}
The functions $T_n$ have the following property.

\begin{lemma} The function $T_n$, given by (\ref{conserved_1}), satisfies
\begin{equation}\label{Tn_property_1}
v\frac{\partial^2}{\partial v^2} T_n=u\frac{\partial^2}{\partial u^2}T_n, \qquad \forall n\in {\mathbb N}.
\end{equation}
\end{lemma}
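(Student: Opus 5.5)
The plan is to verify (\ref{Tn_property_1}) directly on the explicit formula (\ref{conserved_1}) by comparing coefficients of monomials. Both sides are homogeneous polynomials of degree $n$ in $u,v$. I will show that every monomial $u^{a}v^{b}$ with $a+b=n$ has the same coefficient on both sides.

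First, compute $u\,\partial_u^2 T_n$. The term $C_n^kC_n^{k-1}u^kv^{n-k+1}$ contributes
\[
k(k-1)\,C_n^kC_n^{k-1}\,u^{k-1}v^{n-k+1}.
\]
For $k=1$ this vanishes, so only $k=2,\dots,n$ contribute.

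Next, compute $v\,\partial_v^2 T_n$. The term with index $k$ contributes
\[
(n-k+1)(n-k)\,C_n^kC_n^{k-1}\,u^kv^{n-k}.
\]
For $k=n$ this vanishes. Reindex by $k\mapsto k-1$ to get the same monomials $u^{k-1}v^{n-k+1}$, $k=2,\dots,n$, with coefficients
\[
(n-k+2)(n-k+1)\,C_n^{k-1}C_n^{k-2}.
\]
So the two sums run over the same range of monomials, and no boundary terms are left unmatched.

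It remains to check the identity
\[
k(k-1)\,C_n^kC_n^{k-1}=(n-k+2)(n-k+1)\,C_n^{k-1}C_n^{k-2},\qquad 2\le k\le n .
\]
This follows from the elementary relation $j\,C_n^{j}=(n-j+1)\,C_n^{j-1}$. Apply it with $j=k$ to get $k\,C_n^k=(n-k+1)C_n^{k-1}$, and with $j=k-1$ to get $(k-1)C_n^{k-1}=(n-k+2)C_n^{k-2}$. Multiplying these two equalities gives exactly the required identity.

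The only delicate point is the bookkeeping of the summation ranges after the index shift, namely confirming that the endpoint terms ($k=1$ on the left, $k=n$ on the right) vanish. This is immediate from the factors $k-1$ and $n-k$.
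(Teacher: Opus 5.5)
Your proof is correct and follows the same route as the paper, which only says the lemma ``is proved by direct calculations''. Your coefficient comparison supplies exactly those omitted details: the index shift, the vanishing endpoint terms ($k=1$ on one side, $k=n$ on the other), and the identity $jC_n^j=(n-j+1)C_n^{j-1}$ applied at $j=k$ and $j=k-1$.
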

The lemma is proved by direct calculations.

\noindent
Let us study the equation (\ref{negativehier2}). First we take $a=1$ and $b=0$, then we get
\begin{equation}
U_t=\mathcal{R}(U_y),\quad \mbox{where} \quad
U=\left( \begin{array}{c}
u \\
v
 \end{array} \right).
\end{equation} 
 The corresponding equations are
 \begin{align}
 &u_t=uu_y+vu_y+2uv_y+u_xD_x^{-1}u_y+u_xD_x^{-1}v_y,\\
 &v_t=vv_y+uv_y+2vu_y+v_xD_x^{-1}v_y+v_xD_x^{-1}u_y.
 \end{align}
 To get rid of the term $D_x^{-1}$, we let $u=P_x, v=Q_x$. Then we have
\begin{align}
 P_{xt}=&P_xP_{xy}+Q_xP_{xy}+2P_xQ_{xy}+P_{xx}P_y+P_{xx}Q_y,\label{negsymn=1k=0N=2a}\\
 Q_{xt}=&Q_xQ_{xy}+P_xQ_{xy}+2Q_xP_{xy}+Q_{xx}Q_y+Q_{xx}P_y.\label{negsymn=1k=0N=2b}
 \end{align}
The integrability of the above equations follows from the theorem below.

\begin{theorem}
The system of equations (\ref{negsymn=1k=0N=2a}), (\ref{negsymn=1k=0N=2b}) admits an infinite set of conservation laws.
\end{theorem}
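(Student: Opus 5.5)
The plan is to show that the densities $T_n(u,v)$ of (\ref{conserved_1}), with $u=P_x$, $v=Q_x$, remain conserved for the system (\ref{negsymn=1k=0N=2a}), (\ref{negsymn=1k=0N=2b}). Concretely, I aim to prove a conservation law of the form
\begin{equation*}
D_t\, T_n(P_x,Q_x)=D_x\big(T_n\,(P_y+Q_y)\big)+D_y\, F_n(P_x,Q_x)
\end{equation*}
for a suitable function $F_n(u,v)$. Since the $T_n$ are infinitely many independent polynomials (their degree in $(u,v)$ grows with $n$), this gives the infinite set of conservation laws.

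\emph{Step 1: the $t$-derivative.} Set $w=P_y+Q_y$, so that $D_x^{-1}(u_y+v_y)=w$. Then the system reads
\begin{align*}
u_t&=(u+v)u_y+2uv_y+u_xw,\\
v_t&=2vu_y+(u+v)v_y+v_xw.
\end{align*}
For an arbitrary function $T(u,v)$, the terms containing $w$ combine to
\begin{equation*}
(T_uu_x+T_vv_x)\,w=(D_xT)\,w=D_x(Tw)-T(u_y+v_y).
\end{equation*}
Hence
\begin{equation*}
D_tT=D_x(Tw)+\big[(u+v)T_u+2vT_v-T\big]u_y+\big[2uT_u+(u+v)T_v-T\big]v_y .
\end{equation*}

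\emph{Step 2: reduce to a closedness condition.} The last two terms form a total $y$-derivative $D_yF(u,v)$ exactly when the $1$-form
\begin{equation*}
\alpha=A\,du+B\,dv,\qquad A=(u+v)T_u+2vT_v-T,\quad B=2uT_u+(u+v)T_v-T,
\end{equation*}
is closed, i.e. when $\partial_vA=\partial_uB$. Expanding,
\begin{align*}
\partial_vA&=T_u+T_v+(u+v)T_{uv}+2vT_{vv},\\
\partial_uB&=T_u+T_v+(u+v)T_{uv}+2uT_{uu}.
\end{align*}
So the condition is precisely $vT_{vv}=uT_{uu}$, which is the property (\ref{Tn_property_1}) of the Lemma. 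Therefore, for each $n$, $F_n$ exists on the $(u,v)$-plane by the Poincaré lemma; since $A$ and $B$ are polynomials, $F_n$ is polynomial and can be written down explicitly by integrating $\alpha$ along a path.

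\emph{Main obstacle.} The only nontrivial point is recognising that the nonlocal term $u_xD_x^{-1}(\cdot)$ can be absorbed into an $x$-flux via the identity in Step 1. Once that is done, the closedness condition coincides exactly with the Lemma, so the main work is the bookkeeping of Step 1. I would also check that the $T_n$ are nontrivial conserved densities, i.e. not total derivatives. This is clear because each $T_n$ is a nonconstant polynomial in $u,v$ alone.
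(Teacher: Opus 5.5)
Your proposal is correct and is essentially the paper's proof. You multiply by $\partial T_n/\partial P_x$ and $\partial T_n/\partial Q_x$, absorb the $(P_y+Q_y)$-terms into the $x$-flux $T_n(P_y+Q_y)$, and reduce the existence of the $y$-flux to the closedness condition $vT_{vv}=uT_{uu}$, which is exactly the Lemma. One small correction to your side remark on nontriviality: the relevant fact is that $T_n$ is homogeneous of degree $n+1\ge 2$ in $(P_x,Q_x)$, because a nonconstant but linear density $aP_x+bQ_x=D_x(aP+bQ)$ would be trivial.
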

\noindent
{\bf Proof.} For each $n\in \mathbb N$ we take function $T_n(P_x,Q_x)$, where  $T_n$ is given by (\ref{conserved_1}). We multiply equation (\ref{negsymn=1k=0N=2a}) by $\frac{\partial T_n}{\partial P_x}$
and  equation (\ref{negsymn=1k=0N=2b}) by $\frac{\partial T_n}{\partial Q_x} $ and take the sum. By rearranging terms of the sum we get
 \begin{align}
 & P_{xt}\frac{\partial T_n}{\partial P_x}+Q_{xt}\frac{\partial T_n}{\partial Q_x}=(P_y+Q_y)P_{xx}\frac{\partial T_n}{\partial P_x}+(Q_y+P_y)Q_{xx}\frac{\partial T_n}{\partial Q_x}  \nonumber \\
 & + \left((P_x+Q_x)\frac{\partial T_n}{\partial P_x}+2Q_x\frac{\partial T_n}{\partial Q_x}\right)P_{xy} +
  \left((Q_x+P_x)\frac{\partial T_n}{\partial Q_x}+2P_x\frac{\partial T_n}{\partial P_x}\right)Q_{xy}.
 \end{align}
 Thus we have
 \begin{align}
 & \frac{\partial}{\partial t} T_n= \frac{\partial}{\partial x}( T_n(P_y+Q_y))  \nonumber \\
 & + \left((P_x+Q_x)\frac{\partial T_n}{\partial P_x}+2Q_x\frac{\partial T_n}{\partial Q_x}-T_n\right)P_{xy} +
  \left((Q_x+P_x)\frac{\partial T_n}{\partial Q_x}+2P_x\frac{\partial T_n}{\partial P_x}-T_n\right)Q_{xy}.
 \end{align} 
 We claim that there exists a function $S_n(P_x,Q_x)$ such that  
 \begin{equation}\label{partial_S_1}
 \frac{\partial S_n}{\partial P_x}= \left((P_x+Q_x)\frac{\partial T_n}{\partial P_x}+2Q_x\frac{\partial T_n}{\partial Q_x}-T_n\right) \quad \mbox {and} \quad
 \frac{\partial S_n}{\partial Q_x}= \left((Q_x+P_x)\frac{\partial T_n}{\partial Q_x}+2P_x\frac{\partial T_n}{\partial P_x}-T_n\right).
 \end{equation} 
To show the existence of $S_n$, it is enough to check that
\begin{equation}\label{mixed_equal_1}
 \frac{\partial }{\partial Q_x} \left((P_x+Q_x)\frac{\partial T_n}{\partial P_x}+2Q_x\frac{\partial T_n}{\partial Q_x}-T_n\right)=
 \frac{\partial }{\partial P_x}\left((Q_x+P_x)\frac{\partial T_n}{\partial Q_x}+2P_x\frac{\partial T_n}{\partial P_x}-T_n\right).
 \end{equation}
 We have 
\begin{equation}\nonumber
  \frac{\partial }{\partial P_x} \left((P_x+Q_x)\frac{\partial T_n}{\partial P_x}+2Q_x\frac{\partial T_n}{\partial Q_x}-T_n\right)= 
 \frac{\partial T_n}{\partial P_x} + \frac{\partial T_n}{\partial Q_x} +(P_x+Q_x)\frac{\partial^2 T_n}{\partial P_x\partial Q_x}+2Q_x\frac{\partial^2 T_n}{\partial Q_x^2}
 \end{equation}
 and
 \begin{equation}\nonumber
 \frac{\partial }{\partial P_x}\left((Q_x+P_x)\frac{\partial T_n}{\partial Q_x}+2P_x\frac{\partial T_n}{\partial P_x}-T_n\right)= 
 \frac{\partial T_n}{\partial Q_x} + \frac{\partial T_n}{\partial P_x} +(P_x+Q_x)\frac{\partial^2 T_n}{\partial Q_x\partial P_x}+2P_x\frac{\partial^2 T_n}{\partial P_x^2}.
 \end{equation}
 Since 
\begin{equation}\nonumber
Q_x\frac{\partial^2 T_n}{\partial Q_x^2}= P_x\frac{\partial^2 T_n}{\partial P_x^2},
\end{equation}
by equality (\ref{Tn_property_1}), the equality (\ref{mixed_equal_1}) holds. Thus there exists a function $S_n$ satisfying (\ref{partial_S_1}) and for each $n \in {\mathbb N}$ we can write a conservation law
\begin{equation}\label{law_1}
\frac{\partial}{\partial t} T_n= \frac{\partial}{\partial x}( T_n(P_y+Q_y)) + \frac{\partial}{\partial y}( S_n). 
\end{equation} 
$\Box$

\noindent
Let us calculate several conservation laws for equations (\ref{negsymn=1k=0N=2a}), (\ref{negsymn=1k=0N=2b}).
 
\begin{example}
For $n=1$, by (\ref{conserved_1}), we have $T_1(P_x,Q_x)=P_xQ_x$. Now the equality (\ref{partial_S_1}) implies that
\begin{equation}
 \frac{\partial S_1}{\partial P_x}= (P_x+Q_x)Q_x+2Q_xP_x-Q_xP_x \quad \mbox {and} \quad
 \frac{\partial S_1}{\partial Q_x}=(Q_x+P_x)P_x+2P_xQ_x-P_xQ_x.
 \end{equation}
So, we find $S_1(P_x,Q_x)=P_xQ_x^2+P_x^2Q_x$.  Hence by formula (\ref{law_1}) we get the first conservation law of equations (\ref{negsymn=1k=0N=2a}), (\ref{negsymn=1k=0N=2b})
\begin{equation}\label{ex_law_11}
\frac{\partial}{\partial t} \Big(P_xQ_x\Big)= \frac{\partial}{\partial x}\Big( P_xQ_x(P_y+Q_y)\Big) + \frac{\partial}{\partial y}\Big( P_xQ_x(P_x+Q_x)\Big). 
\end{equation} 
 \end{example}
 
\begin{example}
For $n=2$, by  (\ref{conserved_1}), we have $T_2(P_x,Q_x)=2P_xQ_x^2+2P_x^2Q_x$. Now the equality (\ref{partial_S_1}) implies that
\begin{equation}
 \frac{\partial S_2}{\partial P_x}= (P_x+Q_x)(4P_xQ_x+2Q_x^2)+2Q_x(2P_x^2+4P_xQ_x)-(2P_xQ_x^2+2P_x^2Q_x)
\end{equation}
and
\begin{equation} 
 \frac{\partial S_2}{\partial Q_x}=(Q_x+P_x)(2P_x^2+4P_xQ_x)+2P_x(4P_xQ_x+2Q_x^2)-(2P_xQ_x^2+2P_x^2Q_x).
\end{equation}
So, we find $S_2(P_x,Q_x)=2P_x^3Q_x+2P_xQ_x^3+6P_x^2Q_x^2$.  Hence, by formula (\ref{law_1}), we get the second conservation law of equations (\ref{negsymn=1k=0N=2a}), (\ref{negsymn=1k=0N=2b})
\begin{align} 
\frac{\partial}{\partial t} \Big(2P_xQ_x^2+2P_x^2Q_x\Big)= & \frac{\partial}{\partial x}\Big((P_y+Q_y)(2P_xQ_x^2+2P_x^2Q_x)\Big) \nonumber\\
                           & + \frac{\partial}{\partial y}\Big( 2P_xQ_x(P_x^3+3P_x^2Q_x^2+Q_x^3)\Big). \label{ex_law_12}
 \end{align} 
 \end{example}
 
\begin{example}
For $n=3$, by (\ref{conserved_1}), we have $T_3(P_x,Q_x)=3P_x^3Q_x+9P_x^2Q_x^2+3P_xQ_x^3$. Now the equality (\ref{partial_S_1}) implies that
 \begin{align}
  \frac{\partial S_3}{\partial P_x}= & (P_x+Q_x)(9P_x^2Q_x+18P_xQ_x^2+3Q_x^3) +2Q_x(3P_x^3+18P_x^2Q_x+9Q_x^2P_x)\nonumber \\
                                   &- (3P_x^3Q_x+9P_x^2Q_x^2+3P_xQ_x^3)
 \end{align}
and
\begin{align}
  \frac{\partial S_3}{\partial Q_x}= & (P_x+Q_x)(3P_x^3+18P_x^2Q_x+9Q_x^2P_x) +2P_x(9P_x^2Q_x+18P_xQ_x^2+3Q_x^3)\nonumber \\
                                   &- (3P_x^3Q_x+9P_x^2Q_x^2+3P_xQ_x^3),
 \end{align}
and we find $S_3(P_x,Q_x)=3P_xQ_x^4+18P_x^2Q_x^3+18P_x^3Q_x^2+3P_x^4Q_x$.  Hence, by formula (\ref{law_1}), we get the third conservation law of equations (\ref{negsymn=1k=0N=2a}), (\ref{negsymn=1k=0N=2b})
\begin{align}
 &\frac{\partial}{\partial t} \Big(3P_x^3Q_x^4+9P_x^2Q_x^2+3P_xQ_x^3\Big)= \frac{\partial}{\partial x}\Big((P_y+Q_y)(3P_x^3Q_x^4+9P_x^2Q_x^2+3P_xQ_x^3)\Big) \nonumber\\
 &  + \frac{\partial}{\partial y}\Big( 3P_xQ_x(Q_x^3+6P_xQ_x^2+6P_x^2Q_x+P_x^3)\Big). \label{ex_law_13}
\end{align} 
 \end{example} 
 
 \noindent
{\bf Remark.} In the case  $b\ne 0$ and $n\in \mathbb N$ the resulting equation also has an infinite set of conservation laws that are easily constructed from conservation laws of
  equations (\ref{negsymn=1k=0N=2a}), (\ref{negsymn=1k=0N=2b}) and conservation laws of  equations (\ref {symk=0N=2a}), (\ref{symk=0N=2b}).

\subsection{Toda Lattice Equation ($k=1$ case)}
                                                                                                                                     
 Let $k=1$. In symmetric variables the system of equations (\ref{todda1}), (\ref{todda2})  is
\begin{align}
&u_t=uv_x,\label{symk=1N=2a}\\
&v_t=vu_x.\label{symk=1N=2b}
\end{align}
It admits  the recursion operator (see \cite{GurZhel1})
\begin{equation}\label{rec5}\displaystyle
 \mathcal{R}=\left( \begin{array}{cc}
u+v+uv_xD_x^{-1}u^{-1}&2u+uv_xD_x^{-1}v^{-1} \\
2v+vu_xD_x^{-1}u^{-1} & u+v+vu_xD_x^{-1}v^{-1}
 \end{array} \right),
 \end{equation}
 and a set of conserved quantities, given by (\ref{conserved}), 
\begin{equation}
\int T_n(u,v) dx, \qquad n\in{\mathbb N},
\end{equation}
where
\begin{equation}\label{conserved_2}
 T_n(u,v)=(L)^n_{=0}=\sum_{k=1}^{n} C_n^kC_n^{k}u^kv^{n-k}.
\end{equation}
The functions $T_n$ have the following property.

\begin{lemma} The function $T_n$, given by (\ref{conserved_2}), satisfies
\begin{equation}\label{Tn_property_2}
\frac{\partial}{\partial u}\left( u \frac{\partial}{\partial u}T_n \right)=\frac{\partial}{\partial v} \left( v \frac{\partial}{\partial v}T_n \right),\qquad \forall n\in {\mathbb N}.
\end{equation}
\end{lemma}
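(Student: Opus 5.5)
The plan is to verify (\ref{Tn_property_2}) monomial by monomial, exactly as for the previous lemma. First I will fix the precise form of $T_n$ from its definition rather than from the displayed sum. Since
\[
L^n=p^{-n}(p-u)^n(p-v)^n=p^{-n}\sum_{i,l}C_n^iC_n^l(-u)^{n-i}(-v)^{n-l}p^{i+l},
\]
the coefficient of $p^0$ collects the terms with $i+l=n$. Writing $k=n-i$, so that $n-l=n-k$, and using $(-1)^{n}$-cancellation of the sign, this gives
\[
T_n(u,v)=\sum_{k=0}^{n}\left(C_n^k\right)^2u^kv^{n-k}.
\]
The summation must start at $k=0$, which includes the term $v^n$. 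This is not cosmetic: with the range $k=1,\dots,n$ the right-hand side of (\ref{Tn_property_2}) would contain a term $n^2v^{n-1}$ that has no partner on the left, and the identity would fail. I will therefore prove the lemma for the full sum $k=0,\dots,n$, which is what (\ref{conserved}) actually produces, and state this explicitly.

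Next I will apply the two operators to a single monomial $u^kv^{n-k}$:
\[
\frac{\partial}{\partial u}\left(u\frac{\partial}{\partial u}u^kv^{n-k}\right)=k^2u^{k-1}v^{n-k},\qquad
\frac{\partial}{\partial v}\left(v\frac{\partial}{\partial v}u^kv^{n-k}\right)=(n-k)^2u^kv^{n-k-1}.
\]
Both sides of (\ref{Tn_property_2}) are then homogeneous of degree $n-1$. I will compare the coefficients of $u^jv^{n-1-j}$ for $j=0,\dots,n-1$. On the left this coefficient comes from $k=j+1$ and equals $\left(C_n^{j+1}\right)^2(j+1)^2$. On the right it comes from $k=j$ and equals $\left(C_n^{j}\right)^2(n-j)^2$.

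The lemma thus reduces to the binomial identity $(j+1)C_n^{j+1}=(n-j)C_n^j$, both sides of which equal $\frac{n!}{j!(n-j-1)!}$; squaring it gives the required equality. The extreme terms need a quick check. The $k=0$ term is killed by the left operator and supplies the right-hand coefficient at $j=0$. The $k=n$ term is killed by the right operator and supplies the left-hand coefficient at $j=n-1$. So every coefficient is matched and nothing is left over.

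There is no genuine analytic obstacle. The only delicate point is the index range of $T_n$, which I will settle first by deriving $T_n$ directly from $(L^n)_{=0}$.
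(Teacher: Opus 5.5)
Your proof is correct and is the ``direct calculation'' the paper invokes without details. Comparing coefficients of $u^jv^{n-1-j}$ reduces the lemma to $(j+1)C_n^{j+1}=(n-j)C_n^j$. You are also right that the sum must start at $k=0$: the paper's own examples $T_1=u+v$ and $T_2=u^2+4uv+v^2$ show that the lower limit $k=1$ in (\ref{conserved_2}) is a typo.

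Two slips do not affect the argument. First, $(L^n)_{=0}$ is actually $(-1)^n\sum_{k}(C_n^k)^2u^kv^{n-k}$; this overall sign is irrelevant because the identity is linear in $T_n$. Second, with the truncated sum it is the left-hand side, not the right, that carries the unmatched term $n^2v^{n-1}$.
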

The lemma is proved by direct calculations. We will need a corollary of the above lemma.

\begin{corollary}\label{Gn2}
For any $n\in {\mathbb N}$, there exists a function $G_n(u,v)$ such that 
\begin{align}
 & \frac{\partial }{\partial u} G_n = v \frac{\partial}{\partial v}T_n, \nonumber\\
 & \frac{\partial }{\partial v} G_n =  u \frac{\partial}{\partial u}T_n, \qquad \forall n\in {\mathbb N}. \label{Gn_property_2}
\end{align}
\end{corollary}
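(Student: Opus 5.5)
The plan is to show that the 1-form
\[
\omega_n = v\,\frac{\partial T_n}{\partial v}\,du + u\,\frac{\partial T_n}{\partial u}\,dv
\]
is closed on the $(u,v)$-plane, and then to obtain $G_n$ as a potential for it. Since $T_n$ is a polynomial in $u,v$, the coefficients of $\omega_n$ are polynomials, hence smooth on all of $\mathbb{R}^2$. The plane is simply connected (star-shaped), so the Poincar\'e lemma gives a function $G_n$ with $dG_n=\omega_n$, which is exactly (\ref{Gn_property_2}). Concretely, one can take
\[
G_n(u,v)=\int_0^1 \left( u\, v\,\frac{\partial T_n}{\partial v}(su,sv) + v\, u\,\frac{\partial T_n}{\partial u}(su,sv)\right) ds,
\]
or simply integrate monomial by monomial.

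The key step is the closedness condition
\[
\frac{\partial}{\partial v}\left( v \frac{\partial T_n}{\partial v}\right)=\frac{\partial}{\partial u}\left( u \frac{\partial T_n}{\partial u}\right).
\]
This is precisely the identity (\ref{Tn_property_2}) of the preceding lemma, with the two sides interchanged. So the mixed-partials compatibility holds for every $n$, and no further computation is needed beyond invoking that lemma.

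For concreteness, and as a check, I would also give $G_n$ explicitly. Write $T_n=\sum_k a_k u^k v^{n-k}$ with $a_k=(C_n^k)^2$. Then
\[
v\,\partial_v T_n=\sum_k (n-k)\,a_k\, u^k v^{n-k},
\]
and integrating in $u$ gives the candidate
\[
G_n=\sum_k \frac{(n-k)\,a_k}{k+1}\, u^{k+1}v^{n-k}.
\]
The second equation then requires $(n-k)a_k=(k+1)a_{k+1}$, which follows from $C_n^{k+1}(k+1)=C_n^k(n-k)$ after squaring one factor. This is the same identity that underlies the lemma.

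There is no real obstacle here: the only substantive input is the lemma, and the rest is the standard exactness argument for closed forms on a simply connected domain.
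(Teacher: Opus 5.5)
Your argument is correct and is the paper's own (which the paper leaves implicit): identity (\ref{Tn_property_2}) is precisely the compatibility condition $\partial_v(v\,\partial_v T_n)=\partial_u(u\,\partial_u T_n)$ for (\ref{Gn_property_2}), so $G_n$ exists by the Poincar\'e lemma on the plane. Your optional explicit part has two slips, neither of which affects the existence proof. First, the homotopy integrand is missing a factor $s$: as written it returns $\frac{n+1}{n}G_n$, e.g.\ $2uv$ instead of $uv$ for $n=1$, whereas with the $s$ and homogeneity one gets $G_n=\frac{uv}{n+1}(\partial_u T_n+\partial_v T_n)$. Second, the coefficient condition should be $(n-k)^2a_k=(k+1)^2a_{k+1}$, the square of $(n-k)C_n^k=(k+1)C_n^{k+1}$; the condition $(n-k)a_k=(k+1)a_{k+1}$ that you wrote already fails for $n=2$, $k=0$. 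Your formula for $G_n$ itself is nevertheless right and reproduces the paper's $G_2$ and $G_3$.
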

Let us study the equation (\ref{negativehier2}). First we take $a=1$ and $b=0$, then we get
\begin{equation}
U_t=\mathcal{R}(U_y),\quad \mbox{where} \quad
U=\left( \begin{array}{c}
u \\
v
 \end{array} \right).
\end{equation} 
The corresponding equations are
\begin{align}
&u_t=uu_y+vu_y+2uv_y+uv_xD_x^{-1}(u^{-1}u_y+v^{-1}v_y),\\
&v_t=vv_y+uv_y+2vu_y+vu_xD_x^{-1}(u^{-1}u_y+v^{-1}v_y).
\end{align}
If we insert $u=e^{P_x}$ and $v=e^{Q_x}$ into the above system we obtain
\begin{align}
&P_{xt}=P_{xy}(e^{P_x}+e^{Q_x})+2Q_{xy}e^{Q_x}+Q_{xx}e^{Q_x}(P_y+Q_y),\label{negsymn=1k=1N=2a}\\
&Q_{xt}=Q_{xy}(e^{Q_x}+e^{P_x})+2P_{xy}e^{P_x}+P_{xx}e^{P_x}(P_y+Q_y).\label{negsymn=1k=1N=2b}
\end{align}
The integrability of the above equations follows from the theorem below.

\begin{theorem}
The system of equations (\ref{negsymn=1k=1N=2a}), (\ref{negsymn=1k=1N=2b}) admits an infinite set of conservation laws.
\end{theorem}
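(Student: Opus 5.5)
The plan is to follow the proof of the previous theorem, with $T_n$ from (\ref{conserved_2}) as densities. Set $u=e^{P_x}$, $v=e^{Q_x}$ and $W=P_y+Q_y$, and use $T_n(u,v)=T_n(e^{P_x},e^{Q_x})$ as the conserved density. The aim is a conservation law of the form
\begin{equation}\nonumber
\frac{\partial}{\partial t}T_n=\frac{\partial}{\partial x}\big(G_n W\big)+\frac{\partial}{\partial y}S_n ,
\end{equation}
where $G_n$ is the function of Corollary \ref{Gn2} and $S_n(P_x,Q_x)$ is still to be constructed.

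\textbf{Step 1: the time derivative.} Since $u_t=uP_{xt}$ and $v_t=vQ_{xt}$, I multiply (\ref{negsymn=1k=1N=2a}) by $u\,\partial T_n/\partial u$ and (\ref{negsymn=1k=1N=2b}) by $v\,\partial T_n/\partial v$, then add. The terms containing $W$ are
\begin{equation}\nonumber
W\left(u\frac{\partial T_n}{\partial u}\,vQ_{xx}+v\frac{\partial T_n}{\partial v}\,uP_{xx}\right).
\end{equation}
Because $vQ_{xx}=v_x$ and $uP_{xx}=u_x$, Corollary \ref{Gn2} turns this into
\begin{equation}\nonumber
W\,\frac{\partial}{\partial x}G_n=\frac{\partial}{\partial x}(G_nW)-G_n(P_{xy}+Q_{xy}).
\end{equation}
This leaves
\begin{equation}\nonumber
\frac{\partial}{\partial t}T_n=\frac{\partial}{\partial x}(G_nW)+A_n P_{xy}+B_n Q_{xy},
\end{equation}
with
\begin{align}
A_n&=(u+v)u\frac{\partial T_n}{\partial u}+2uv\frac{\partial T_n}{\partial v}-G_n,\nonumber\\
B_n&=(u+v)v\frac{\partial T_n}{\partial v}+2uv\frac{\partial T_n}{\partial u}-G_n.\nonumber
\end{align}

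\textbf{Step 2: existence of $S_n$.} I want $S_n$ with $\partial S_n/\partial P_x=A_n$ and $\partial S_n/\partial Q_x=B_n$. Then $A_nP_{xy}+B_nQ_{xy}=\partial_y S_n$ and the conservation law follows. It suffices to check the compatibility condition $\partial A_n/\partial Q_x=\partial B_n/\partial P_x$. Since $\partial/\partial P_x=u\,\partial/\partial u$ and $\partial/\partial Q_x=v\,\partial/\partial v$, this is $v\,\partial_vA_n=u\,\partial_uB_n$.

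Expanding both sides and using $\partial_vG_n=u\partial_uT_n$ and $\partial_uG_n=v\partial_vT_n$ from Corollary \ref{Gn2}, the mixed terms $(u+v)uv\,\partial^2_{uv}T_n$ cancel. So do the first-order terms coming from differentiating the coefficients against those coming from $G_n$. I expect the difference to collapse to
\begin{equation}\nonumber
v\,\partial_vA_n-u\,\partial_uB_n=2uv\left[\frac{\partial}{\partial v}\Big(v\frac{\partial T_n}{\partial v}\Big)-\frac{\partial}{\partial u}\Big(u\frac{\partial T_n}{\partial u}\Big)\right],
\end{equation}
which vanishes by Lemma (\ref{Tn_property_2}).

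\textbf{Main obstacle.} The delicate part is Step 2: verifying that the $G_n$ contributions cancel exactly the extra first-order terms, so that only the combination in (\ref{Tn_property_2}) remains. This is a short but error-prone computation, and the $x$-flux must be taken as $G_nW$ rather than $T_nW$ (unlike the $k=0$ case) for it to work.

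Once this is done, the $S_n$ give the conservation laws $\partial_tT_n=\partial_x(G_nW)+\partial_yS_n$ for every $n\in\mathbb N$. They form an infinite family because the densities $T_n$ are homogeneous of distinct degrees $n$ in $(u,v)$. As a sanity check I would compute the $n=1$ case, where $T_1=uv$, $G_1=uv$ and $S_1=uv(u+v)$.
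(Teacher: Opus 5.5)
Your proposal is correct and follows the paper's proof essentially step by step. It uses the same multipliers $u\,\partial_uT_n$ and $v\,\partial_vT_n$, the same $x$-flux $G_n(P_y+Q_y)$ from Corollary \ref{Gn2}, and the same $A_n,B_n$, and you correctly reduce the compatibility condition to $2uv[\partial_v(v\partial_vT_n)-\partial_u(u\partial_uT_n)]=0$ via (\ref{Tn_property_2}). The only slip is in your $n=1$ sanity check: here $T_1=u+v$ (the sum in (\ref{conserved_2}) must include the $k=0$ term), so $G_1=uv$ and $S_1=\tfrac{1}{2}u^2+2uv+\tfrac{1}{2}v^2$. The values $T_1=uv$ and $S_1=uv(u+v)$ belong to the shallow-water case, and $T_1=uv$ does not satisfy (\ref{Tn_property_2}).
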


\noindent
{\bf Proof.} For each $n\in \mathbb N$ we take function $T_n(u,v)$, where  $T_n$ is given by (\ref{conserved_2}).
Let $T_{n1}(u,v)=\frac{\partial T_n}{\partial u}$, $T_{n2}(u,v)=\frac{\partial T_n}{\partial v}$ and $T_{n12}(u,v)=\frac{\partial^2 T_n}{\partial u\partial v}$.
 We multiply equation (\ref{negsymn=1k=1N=2a}) by $e^{P_x}T_{1n}(e^{P_x},e^{Q_x})$,  equation (\ref{negsymn=1k=1N=2b}) by $e^{Q_x}T_{2n}(e^{P_x},e^{Q_x})$ and take the sum. 
 By rearranging terms of the sum we get
 \begin{align}
 &  T_{n1}e^{P_x}P_{xt}+T_{n2}e^{Q_x}Q_{xt}= T_{n1}e^{P_x}e^{Q_x}Q_{xx}(P_y+Q_y)+T_{n2}e^{Q_x}e^{P_x}P_{xx}(P_y+Q_y) \nonumber\\
 &+ \Big( T_{n1}e^{P_x}(e^{P_x}+e^{Q_x})+2T_{n2}e^{Q_x}e^{P_x}\Big)P_{xy}+\Big( T_{n2}e^{Q_x}(e^{P_x}+e^{Q_x})+2T_{n1}e^{Q_x}e^{P_x}\Big)Q_{xy}\, .
 \end{align}
 Using function $G_n$, introduced in  Corollary \ref{Gn2} we have
 \begin{align}
 & \frac{\partial}{\partial t} T_n= \frac{\partial}{\partial x}\Big( G_n(P_y+Q_y)\Big)  + \Big( T_{n1}e^{P_x}(e^{P_x}+e^{Q_x})+2T_{n2}e^{P_x}e^{P_x}e^{Q_x}-G_n\Big)P_{xy}+\\ 
 &\Big( T_{2n}e^{Q_x}(e^{P_x}+e^{Q_x})+2T_{n1}e^{P_x}e^{Q_x}-G_n\Big)Q_{xy},
 \end{align} 
 where $G_n,T_n$  are functions of $e^{P_x},e^{Q_x}$.
 We claim that there exists a function $S_n(P_x,Q_x)$ such that  
 \begin{align}
 &\frac{\partial S_n}{\partial P_x}=  \Big( T_{n1}e^{P_x}(e^{P_x}+e^{Q_x})+2T_{n2}e^{Q_x}e^{P_x}-G_n \Big),\quad  \nonumber\\
 &\frac{\partial S_n}{\partial Q_x}= \Big( T_{n2}e^{Q_x}(e^{P_x}+e^{Q_x})+2T_{n1}e^{Q_x}e^{P_x}-G_n\Big)\label{partial_S_2} .
 \end{align} 
To show the existence of $S_n$, it is enough to check that
\begin{align}
 & \frac{\partial }{\partial Q_x}\Big( T_{n1}e^{P_x}(e^{P_x}+e^{Q_x})+2T_{n2}e^{Q_x}e^{P_x}-G_n \Big)=  \nonumber\\
 & \frac{\partial }{\partial P_x}\Big( T_{n2}e^{Q_x}(e^{P_x}+e^{Q_x})+2T_{n1}e^{Q_x}e^{P_x}-G_n\Big). \label{mixed_equal_2}
 \end{align}
 Using Corollary \ref{Gn2} we have 
 \begin{align}
 & \frac{\partial }{\partial Q_x}\Big( T_{n1}e^{P_x}(e^{P_x}+e^{Q_x})+2T_{2n}e^{Q_x}e^{P_x}-G_n \Big)=  \nonumber\\
 & T_{n12}e^{P_x}e^{Q_x}(e^{P_x}+e^{Q_x})+T_{n1}e^{P_x}e^{Q_x}+2T_{n12}e^{Q_x}e^{Q_x}e^{P_x}+2T_{n2}e^{Q_x}e^{P_x}-T_{n1}e^{Q_x}e^{P_x}
 \end{align}
 and
 \begin{align}
 & \frac{\partial }{\partial P_x}\Big( T_{n2}e^{Q_x}(e^{P_x}+e^{Q_x})+2T_{n1}e^{Q_x}e^{P_x}-G_n\Big)=  \nonumber\\
 & T_{n12}e^{P_x}e^{Q_x}(e^{P_x}+e^{Q_x})+T_{n2}e^{P_x}e^{Q_x}+2T_{n12}e^{Q_x}e^{P_x}e^{P_x}+2T_{n1}e^{Q_x}e^{P_x}-T_{n2}e^{Q_x}e^{P_x}.
 \end{align}
 Since 
\begin{equation}\nonumber
T_{n12}e^{Q_x}+T_{n2}=T_{n12}e^{P_x}+T_{n1},
\end{equation}
by  equality (\ref{Tn_property_2}), the equality (\ref{mixed_equal_2}) holds. Thus there exists a function $S_n$ satisfying (\ref{partial_S_2}) and for each $n \in {\mathbb N}$ we can write a conservation law
\begin{equation}\label{law_2}
\frac{\partial}{\partial t} T_n= \frac{\partial}{\partial x}( G_n(P_y+Q_y)) + \frac{\partial}{\partial y}( S_n), 
\end{equation} 
 where $G_n,T_n$ and $S_n$ are functions of $e^{P_x},e^{Q_x}$
$\Box$

\noindent
Let us calculate several conservation laws for equations (\ref{negsymn=1k=1N=2a}), (\ref{negsymn=1k=1N=2b}).
 
\begin{example}
For $n=1$, by (\ref{conserved_1}), we have $T_1(u,v)=u+v$. The equality (\ref{Gn_property_2}) implies that
\begin{equation}
 \frac{\partial G_1}{\partial u}= v \quad \mbox {and} \quad
 \frac{\partial G_1}{\partial v}=u.
 \end{equation}
So, we find $G_1=uv$. We set $u=e^{P_x}$ and $v=e^{Q_x}$. Now the equality (\ref{partial_S_2}) implies that
\begin{equation}
 \frac{\partial S_1}{\partial P_x}= e^{2P_x}+2e^{P_x}e^{Q_x}
\end{equation}
and
\begin{equation} 
 \frac{\partial S_1}{\partial Q_x}= e^{2Q_x}+2e^{P_x}e^{Q_x}.
\end{equation} 
So, we find $S_1=\frac{1}{2}e^{2P_x}+2e^{P_x+Q_x}+\frac{1}{2}e^{2Q_x}$.

\noindent
Hence, by formula (\ref{law_2}), we get the first conservation law of equations (\ref{negsymn=1k=1N=2a}), (\ref{negsymn=1k=1N=2b})
\begin{equation}\label{ex_law_21}
\frac{\partial}{\partial t} \Big(e^{P_x}+e^{Q_x}\Big)= \frac{\partial}{\partial x}\Big( e^{P_x+Q_x}(P_y+Q_y)\Big) + 
\frac{\partial}{\partial y}\Big(\frac{1}{2}e^{2P_x}+2e^{P_x+Q_x}+\frac{1}{2}e^{2Q_x} \Big). 
\end{equation} 
 \end{example}
 
\begin{example}
For $n=2$, by  (\ref{conserved_1},) we have $T_2(u,v)=u^2+4uv+v^2$.  
The equality (\ref{Gn_property_2}) implies that
\begin{equation}
 \frac{\partial G_2}{\partial u}= 4uv+2v^2 \quad \mbox {and} \quad
 \frac{\partial G_2}{\partial v}=2u^2+4uv.
 \end{equation}
So, we find $G_2=2u^2v+2uv^2$. We set $u=e^{P_x}$ and $v=e^{Q_x}$. Now the equality (\ref{partial_S_2}) implies that
\begin{equation}
 \frac{\partial S_2}{\partial P_x}= 2e^{3P_x}+12e^{2P_x+Q_x}+6e^{P_x+2Q_x}
\end{equation}
and
\begin{equation} 
 \frac{\partial S_2}{\partial Q_x}= 2e^{3Q_x}+12e^{2Q_x+P_x}+6e^{Q_x+2P_x}.
\end{equation} 
So, we find $S_2=\frac{2}{3}e^{3P_x}+6e^{2P_x+Q_x}+6e^{P_x+2Q_x}+\frac{2}{3}e^{3Q_x}$.

\noindent
Hence, by formula (\ref{law_2}), we get the second conservation law of equations (\ref{negsymn=1k=1N=2a}), (\ref{negsymn=1k=1N=2b})
\begin{align}
& \frac{\partial}{\partial t} \Big(e^{2P_x}+4e^{Q_x+P_x}+e^{2Q_x}\Big)= \frac{\partial}{\partial x}\Big( (2e^{2P_x+Q_x}+2e^{P_x+2Q_x})(P_y+Q_y)\Big) \nonumber\\
& + \frac{\partial}{\partial y}\Big(\frac{2}{3}e^{3P_x}+6e^{2P_x+Q_x}+6e^{P_x+6Q_x}+\frac{2}{3}e^{3Q_x}\Big). \label{ex_law_22}
\end{align} 
 \end{example}
 
\begin{example}
For $n=3$, by  (\ref{conserved_1}), we have $T_3(u,v)=u^3+9u^2v+9uv^2+v^3$.  
The equality (\ref{Gn_property_2}) implies that
\begin{equation}
 \frac{\partial G_3}{\partial u}= 9u^2v +18uv^2+3v^3\quad \mbox {and} \quad
 \frac{\partial G_3}{\partial v}=3u^3+18u^2v+9uv^2.
 \end{equation}
So, we find $G_3=3u^3v+9u^2v^2+3uv^3$. We set $u=e^{P_x}$ and $v=e^{Q_x}$.  Now the equality (\ref{partial_S_2}) implies that
\begin{equation}
 \frac{\partial S_3}{\partial P_x}= 3e^{4P_x}+36e^{3P_x+Q_x}+54e^{2P_x+2Q_x}+12e^{P_x+3Q_x},
\end{equation}
and
\begin{equation} 
 \frac{\partial S_3}{\partial Q_x}= 12e^{3P_x+Q_x}+54e^{2P_x+2Q_x}+36e^{P_x+3Q_x}+3e^{4Q_x}
\end{equation} 
So, we find $S_3=\frac{3}{4}e^{4P_x}+12e^{3P_x+Q_x}+27e^{2P_x+2Q_x}+12e^{P_x+3Q_x}+\frac{3}{4}e^{4Q_x}$.

\noindent
Hence, by formula (\ref{law_2}), we get the third conservation law of equations (\ref{negsymn=1k=1N=2a}), (\ref{negsymn=1k=1N=2b})
\begin{align}
& \frac{\partial}{\partial t} \Big(e^{3P_x}+9e^{2P_x+Q_x}+9e^{P_x+2Q_x}+e^{3Q_x}\Big)= \nonumber\\
& \frac{\partial}{\partial x}\Big( (3e^{3P_x+Q_x}+9e^{2P_x+2Q_x}+3e^{P_x+2Q_x})(P_y+Q_y)\Big) \nonumber\\
& + \frac{\partial}{\partial y}\Big(\frac{3}{4}e^{4P_x}+12e^{3P_x+Q_x}+27e^{2P_x+2Q_x}+12e^{P_x+3Q_x}+\frac{3}{4}e^{4Q_x}\Big). \label{ex_law_23}
\end{align}
 \end{example} 
 
{\bf Remark.} In the case  $b\ne 0$ and $n\in \mathbb N$ the resulting equation has an infinite set of conservation laws that are constructed from conservation laws of
  equations (\ref{negsymn=1k=1N=2a}), (\ref{negsymn=1k=1N=2b}) and conservation laws of  equations (\ref {symk=1N=2a}), (\ref{symk=1N=2b}).

\end{document}